\newcommand{\prettylstformat}[0]{
\lstset{language=Prolog,
        numbers=left,numberstyle=\tiny,stepnumber=1,numbersep=8pt,
        frameround=tttt,
        frame=ltrb,
        basicstyle=\scriptsize\ttfamily,
        commentstyle=\color{gray},
        breaklines=true,breakatwhitespace=true,
        showlines=true,
        showspaces=false,showtabs=false,
        keywords={pred,prop},
        escapeinside=~~,
      }}
\tikzset{main node/.style={fill=white,minimum size=0.5cm,inner sep=0pt},
}
\newcommandx{\PLGout}[2][1=]{\todo[backgroundcolor=green,#1]{{\bf PLG:}~#2}}
\newcommandx{\MKout}[2][1=]{\todo[backgroundcolor=orange!20,#1]{{\bf MK:}~#2}}
\newcommand\costrelation{cost relation}
\newcommand\costrelations{cost relations}
\newcommand{\benchmarks}{\textbf{Bench}} 
\newcommand{\tabres}{\textbf{Res}} 
\newcommand{\colexp}{\textbf{Bound Inferred}}
\newcommand{\coltimes}{$\mathbf{T_A(ms)}$}
\newcommand{\seqcost}{\textbf{SCost}} 
\newcommand{\parcost}{\textbf{PCost}} 
\newcommand{\parthreads}{\textbf{SThreads}}
\newcommand{\timeunits}{ms} 
\newcommand{\atime}{\textbf{Time (\timeunits)}}
\newcommand{\bigo}{\textbf{BigO}}
\newcommand\mlength[1]{l_{#1}}
\newcommand\mint[1]{i_{#1}}
\newcommand{\headcost}{\varphi}
\newcommand\Prog{\mathcal P}
\newcommand\rs{{\cal R}_{\infty}}
\newcommand\cd{\Sigma}
\newcommand\ct{\vec \terme} 
\newcommand\cts{E}
\newcommand\szd{{\mathcal N}^{m}_{\top}}
\newcommand\sztypdom{\szd}
\newcommand\ruf[1]{{\mathcal C}_{#1}} 
\newcommand\eruf[1]{\hat{{\mathcal C}}_{#1}}
\renewcommand\vec\bar
\def\hlinewd#1{
  \noalign{\ifnum0=`}\fi\hrule \@height #1 \futurelet
   \reserved@a\@xhline}
\newcommand\predp{\textup{\tt p}} 
\newcommand\predq{\textup{\tt q}}
\newcommand\terme{\textup{\tt e}}
\newcommand\varx{\textup{\tt x}}
\newcommand\vecx{\vec \varx}
\newcommand\neck{\mathrel{\textup{\tt :-}}}
\newcommand{\kbd}[1]{\mbox{\tt #1}}
\newcommand{\ciaopp}{CiaoPP\xspace}
\renewcommand\cd{{\Pi}}
\newcommand\resr{r}
\newcommand\stdub[2]{{\tt C}_{#1}(#2)}
\newcommand\beginequationspace{}
\newcommand\finishequationspace{}
\newcommand{\secbeg}{}
\newcommand{\secend}{}
\newcounter{mnotei}
\newcommand{\optionaltext}[1]{#1}
\renewcommand{\optionaltext}[1]{ }
\newtheorem*{ntheorem}{Theorem} 
\begin{document}
\label{firstpage}
\title{Towards a General Framework for Static Cost Analysis of Parallel Logic Programs}

\author{M. Klemen\inst{1,2} \and
P. L\'{o}pez-Garc\'{i}a\inst{1,3} \and
J.P. Gallagher\inst{1,4} \and \\
J.F. Morales\inst{1} \and
M.V. Hermenegildo\inst{1,2}
}

\authorrunning{F. Author et al.}

\institute{IMDEA Software Institute, Spain \and
  ETSI Inform\'{a}ticos, Universidad Polit\'{e}cnica de Madrid (UPM), Spain \and
  Spanish Council for Scientific Research (CSIC), Spain \and
  Roskilde University, Denmark\\ 
\email{\hspace*{-12mm}\{maximiliano.klemen,pedro.lopez,john.gallagher,josef.morales,manuel.hermenegildo\}@imdea.org} \\
}

\maketitle

\begin{abstract}
The estimation and control of \emph{resource usage} is now an
important challenge in an increasing number of computing systems. In
particular, requirements on timing and energy arise in a wide
variety of applications such as internet of things, cloud computing,
health, transportation, and robots.  At
the same time, parallel computing, with (heterogeneous) multi-core
platforms in particular, has become the dominant paradigm in computer
architecture. Predicting resource usage
on such platforms poses a difficult challenge.  Most work on static
resource analysis has focused on sequential programs, and relatively
little progress has been made on the analysis of parallel programs,
or more specifically on parallel logic programs.  We propose a novel, general, and
flexible framework for setting up cost equations/relations which can
be instantiated for performing resource usage analysis of parallel
logic programs for a wide range of resources, platforms and execution
models.  The analysis estimates both lower and upper bounds on the
resource usage of a parallel program (without executing it) as
functions on input data sizes. In addition, it also infers other
meaningful information to better exploit and assess the potential and
actual parallelism of a system. We develop a method for solving cost
relations involving the $max$ function that arise in the analysis of
parallel programs.
Finally, we instantiate our
general framework for the analysis of logic programs with Independent
And-Parallelism,
report on an implementation
within the \ciaopp{} system, and provide some experimental results.
To our knowledge, this is the first approach to the
cost analysis of \emph{parallel logic programs}.
\end{abstract}

\begin{keywords}
Resource Usage Analysis, Parallelism, Static Analysis, Complexity
Analysis, (Constraint) Logic Programming, Prolog.
\vspace*{-3mm}
\end{keywords}

\secbeg
\section{Introduction}
\label{sec:intro}
\secend

Estimating in advance the resource usage of computations is useful for
a number of applications; examples include granularity control in
parallel/distributed systems, automatic program optimization,
verification of resource-related specifications and detection of
performance bugs, as well as helping developers make resource-related
design decisions.
Besides \emph{time} and \emph{energy}, we assume a broad concept of
resources as numerical properties of the execution of a program,
including the number of \emph{execution steps}, the number of
\emph{calls} to a procedure, the number of \emph{network accesses},
number of \emph{transactions} in a database, and other user-definable
resources.
The goal of automatic static analysis is to estimate such properties
without running the program with concrete data, as a function of input
data sizes and possibly other (environmental) parameters.

Due to
the heat generation barrier in
traditional sequential architectures, parallel computing, with
(heterogeneous) multi-core processors in particular, has become the
dominant paradigm in current computer architecture.
Predicting resource usage on such platforms poses important
challenges.  Most work on static resource analysis
has focused on sequential programs, but relatively little progress has
been made on the analysis of parallel programs, or on parallel
logic programs in particular.
The significant body of work on static analysis of sequential logic
programs has already been applied to the analysis of other
programming paradigms, including imperative programs. This is achieved
via a transformation into \emph{Horn
  clauses}~\cite{decomp-oo-prolog-lopstr07-short}.
In this paper we concentrate on the analysis of parallel Horn clause
programs, which could be the result of such a translation from a
parallel imperative program or be themselves the source program.
Our starting point is the well-developed technique of setting up
recurrence relations representing resource usage functions
parameterized by input data
sizes~\cite{Wegbreit75-short-plus,Rosendahl89-short,granularity-short,caslog-short,low-bounds-ilps97-short,resource-iclp07-short,AlbertAGP11a-short,plai-resources-iclp14-short},
which are then solved to obtain (exact or safely approximated) closed
forms of such functions (i.e., functions that provide upper or lower
bounds on resource usage).
We build on this and propose a novel, general, and flexible framework
for setting up cost equations/relations which can be instantiated for
performing static resource usage analysis of parallel logic programs
for a wide range of resources, platforms and execution models.  Such
an analysis estimates both lower and upper bounds on the resource
usage of a parallel program
as functions on input data sizes.
We have instantiated the framework for dealing with
Independent And-Parallelism (IAP)~\cite{sinsi-jlp-short,partut-toplas-short},
which refers to the parallel execution of conjuncts in a goal.
However, the results can be applied to other languages and types of
parallelism, by performing suitable transformations into Horn clauses.

The main contributions of this paper can be summarized as follows:

\vspace*{-2mm}
\begin{itemize}
\item We have extended a general static analysis framework for the
  analysis of sequential Horn clause
  programs~\cite{resource-iclp07-short,plai-resources-iclp14-short},
  to deal with parallel programs.

\item Our extensions and generalizations support a wide range of
  resources, platforms and parallel/distributed execution models, and
  allow the inference of both lower and upper bounds on resource
  usage. This is
  the first approach, to our knowledge, to the cost analysis of \emph{parallel logic
    programs} that can deal with features such as
  backtracking, multiple solutions (i.e., non-determinism), and
  failure.
  
\item We have instantiated the developed framework to infer
  useful information for assessing and exploiting the potential
  and actual parallelism of a system.

\item We have developed a method for finding closed-form functions of
  cost relations involving the $max$ function that arise in the
  analysis of parallel programs.

\item We have developed a prototype implementation that instantiates
  the framework for the analysis of logic programs with Independent
  And-Parallelism within the \ciaopp{}
  system~\cite{ciaopp-sas03-journal-scp-short,resource-iclp07-short,plai-resources-iclp14-short},
  and provided some experimental results.

\end{itemize}

\secbeg
\section{Overview of the Approach}
\label{sec:Overview}
\secend

Prior to explaining our approach, we provide some preliminary
concepts.  Independent And-Parallelism
arises between two goals when their corresponding executions do not
affect each other. For pure goals (i.e., without side effects) a
sufficient condition for the correctness of IAP is the absence of
variable sharing at run-time among such goals.
IAP has traditionally been expressed using the \texttt{\&/2} meta-predicate
as the constructor to represent the parallel execution of goals. In
this way, the conjunction of goals (i.e., literals) \texttt{p \& q} in
the body of a clause will trigger the execution of goals \texttt{p}
and \texttt{q} in parallel, finishing when both executions
finish. 
        
Given a program $\Prog$ and a predicate $\predp \in \Prog$ of arity
$k$ and a set $\cd$ of $k$-tuples of calling data to $\predp$, we
refer to the \emph{(standard) cost} of a call $\predp(\ct)$ (i.e., a
call to $\predp$ with actual data $\vec e \in \cd$), as the resource
usage (under a given cost metric) of the complete execution of
$\predp(\ct)$. The \emph{standard cost} is formalized as a function
$\ruf{\predp}: \cd \rightarrow \rs$, where $\rs$ is the set of real
numbers augmented with the special symbol $\infty$ (which is used to
represent non-termination).
We extend the function $\ruf{\predp}$ to the powerset of $\cd$, i.e.,
$\eruf{\predp}: 2^{\cd} \rightarrow 2^{\rs}$, where
$\eruf{\predp}(\cts) = \{\ruf{\predp}(\ct) \mid \ct \in \cts\}$.  Our
goal is to abstract (safely approximate, as accurately as possible)
$\eruf{\predp}$ (note that $\ruf{\predp}(\ct) =
\eruf{\predp}(\{\ct\})$). Intuitively, this abstraction is the
composition of two abstractions: a size abstraction and a cost
abstraction. The goal of the analysis is to infer two functions
$\eruf{\predp}^{\downarrow}$ and $\eruf{\predp}^{\uparrow}: \sztypdom
\rightarrow \rs$ that give lower and upper bounds respectively on the
cost function $\eruf{\predp}$, where $\szd$ is the set of $m$-tuples
whose elements are natural numbers or the special symbol $\top$,
meaning that the size of a given term under a given size metric is
\emph{undefined}.  Such bounds are given as a function of tuples of
data sizes (representing the concrete tuples of data of the concrete
function $\eruf{\predp}$). Typical size metrics are the actual value
of a number, the length of a list, the size (number of constant and
function symbols) of a term,
etc.~\cite{resource-iclp07-short,plai-resources-iclp14-short}.

We now enumerate different metrics used to evaluate the performance of
parallel versions
of a logic program, compared against its corresponding sequential
version. When possible, we define these metrics parameterized with
respect to the resource (e.g., number of \emph{resolution steps},
\emph{execution time}, or \emph{energy consumption}) in which the cost
is expressed.

\begin{itemize}

\item \textbf{Sequential cost (\emph{Work})}: It is the standard cost
  of executing a program, assuming no parallelism.

\item \textbf{Parallel cost (\emph{Depth})}: It is the cost of
  executing a program in parallel, considering an unbounded number
  of processors.

\item \textbf{Maximum number of processes running in parallel
  ($Proc_{max}(P)$)}:
  The maximum number of processes that can run
  simultaneously in a program. This is useful, for example, to
  determine what is the minimum number of processors that are required
  to actually run all the processes in parallel.

\end{itemize}

The following example illustrates our approach.

\secbeg

\begin{example}
\label{examp:prime-numbers}

Consider the predicate \texttt{scalar/3} below, and a calling mode to
it with the first argument bound to an integer $n$ and the second one
bound to a list of integers $[ x_1, x_2, \cdots, x_k ]$. Upon success,
the third argument is bound to the list of products $[ n \cdot x_1, n
  \cdot x_2, \cdots, n \cdot x_k ]$. Each product is recursively
computed by predicate \texttt{mult/3}. The calling modes are
automatically inferred by
\ciaopp\ (see~\cite{ciaopp-sas03-journal-scp-short} and its
references): the first two arguments of both predicates are input, and
their last arguments are output.

\prettylstformat
\begin{lstlisting}[escapechar=\#]
scalar(_,[],[]).
scalar(N,[X|Xs],[Y|Ys]):- 
   mult(N,X,Y) & scalar(N,Xs,Ys).

mult(0,_,0).
mult(N,X,Y):-
   N>1,
   N1 is N - 1,
   mult(N1,X,Y0),
   Y is Y0 + X.
\end{lstlisting}

\noindent
The call to the parallel \texttt{\&/2} operator in the body of the
second clause of \texttt{scalar/3} causes the calls to \texttt{mult/3}
and \texttt{scalar/3} to be executed in parallel.  

We want to infer the cost of such a call to \texttt{scalar/3},
in terms of the number of resolution steps, as a function of its input
data sizes.
We use the \ciaopp{} system to infer size relations for the different
arguments in the clauses, as well as dealing with a rich set of size
metrics (see~\cite{resource-iclp07-short,plai-resources-iclp14-short}
for details). Assume that the size metrics used in this example are
the \emph{actual value} of \texttt{N} (denoted \texttt{int(N)}), for
the first argument, and the \emph{list-length} for the second and
third arguments (denoted \texttt{length(X)} and
\texttt{length(Y)}). Since size relations are obvious in this example,
we focus only on the setting up of \costrelations{} for the sake of
brevity.
Regarding the number of solutions, in this example all the predicates
generate at most one solution.
For simplicity we assume that
all builtin predicates, such as \texttt{is/2} and the comparison
operators have zero cost (in practice they have a ``trust''assertion
that specifies their cost as if it had been inferred by the system).
As the program contains parallel calls, we
are interested in inferring both total resolution steps, i.e.,
considering a sequential execution (represented by the \texttt{seq}
identifier), and the number of parallel steps, considering a parallel
execution, with infinite number of processors (represented by
\texttt{par}). In the latter case, the definition of this resource
establishes that the aggregator of the costs of the parallel calls that
are arguments of the \texttt{\&/2} meta-predicate is the \texttt{max/2}
function.  Thus, the number of parallel resolution steps for \texttt{p
  \& q} is the maximum between the parallel steps performed by
\texttt{p} and the ones performed by \texttt{q}. However, for
computing the \emph{total resolution steps}, the aggregation operator
we use is the addition, both for parallel and sequential calls. For
brevity, in this example we only
infer upper bounds on resource usages.

We now
set up the \costrelations{} for \texttt{scalar/3} and
\texttt{mult/3}. Note that the cost functions have two arguments,
corresponding to the sizes of the input arguments\footnote{For the
  sake of clarity, we abuse of notation in the examples when
  representing the cost functions that depend on data sizes.}.  In the
equations, we underline the operation applied as
cost aggregator for \texttt{\&/2}.

For the sequential execution (\texttt{seq}), we obtain the
following cost relations:

\vspace*{-1mm}
\beginequationspace
\begin{equation*}
    \begin{array}{rll}
      \stdub{\mathtt{scalar}}{n,l} = & 1 &  \text{ if } l = 0 \\ [-1mm]
      \stdub{\mathtt{scalar}}{n,l} = & \stdub{\mathtt{mult}}{n} \underline{\underline{+}} \stdub{\mathtt{scalar}}{n,l-1} + 1 & \text{ if } l > 0 \\
      \\ [-4mm]
      \stdub{\mathtt{mult}}{n} = & 1 &  \text{ if } n = 0 \\
      \stdub{\mathtt{mult}}{n} = & \stdub{\mathtt{mult}}{n-1} + 1 & \text{ if } n > 0 \\
    \end{array}
\end{equation*}

\noindent 
After solving these equations and composing the closed-form solutions,
we obtain the following closed-form functions:

\vspace*{-1mm}
\beginequationspace
\begin{equation*}
    \begin{array}{rll}
      \stdub{\mathtt{scalar}}{n,l} = & (n + 2) \times l + 1 &  \text{ if } n \geq 0 \wedge l \geq 0 \\ 
      \stdub{\mathtt{mult}}{n} = & n + 1 & \text{ if } n \geq 0 \\ \\ [-3mm]
    \end{array}
\end{equation*}
\finishequationspace

\noindent
For the parallel execution (\texttt{par}), we obtain the following
cost relations:

\vspace*{-1mm}
\beginequationspace
\begin{equation*}
    \begin{array}{rll}
      \stdub{\mathtt{scalar}}{n,l} = & 1 &  \text{ if } l = 0 \\ [-1mm]
      \stdub{\mathtt{scalar}}{n,l} = & \underline{\underline{max}}(\stdub{\mathtt{mult}}{n}, \stdub{\mathtt{scalar}}{n,l-1}) + 1 & \text{ if } l > 0 \\
      \\ [-4mm]
      \stdub{\mathtt{mult}}{n} = & 1 &  \text{ if } n = 0 \\
      \stdub{\mathtt{mult}}{n} = & \stdub{\mathtt{mult}}{n-1} + 1 & \text{ if } n > 0 \\
    \end{array}
\end{equation*}

\noindent 
After solving these equations and composing the closed forms, we
obtain the following closed-form functions:

\vspace*{-1mm}
\beginequationspace
\begin{equation*}
    \begin{array}{rll}
      \stdub{\mathtt{scalar}}{n,l} = & n + l + 1 &  \text{ if } n \geq 0 \wedge l \geq 0 \\ 
      \stdub{\mathtt{mult}}{n} = & n + 1 & \text{ if } n \geq 0 \\ \\ [-3mm]
    \end{array}
\end{equation*}
\finishequationspace

\noindent
By comparing the complexity order (in terms of resolution steps) of
the sequential execution of \texttt{scalar/3}, $O(n \cdot l)$, with
the complexity order of its parallel execution (assuming an ideal
parallel model with an unbounded number of processors) $O(n + l)$, we
can get a hint about the maximum achievable parallelization of the
program.

Another useful piece of information about \texttt{scalar/3}
that we want to infer is the maximum number of processes that may
run in parallel, considering all possible executions.
For this purpose, we define a resource in our framework named
\texttt{sthreads}.
The operation that aggregates the cost of both arguments of the
meta-predicate \texttt{\&/2}, \texttt{count\_process/3} for the
\texttt{sthreads} resource, adds the maximum number of processes for
each argument plus one additional process, corresponding to the one
created by the call to \texttt{\&/2}. The sequential cost aggregator is now
the \emph{maximum} operator, in order to keep track of the maximum
number of processes created along the different instructions of the
program executed sequentially.  Note that if the instruction
\texttt{p} executes at most $Pr_p$ processes in parallel, and the
instruction \texttt{q} executes at most $Pr_q$ processes, then the
program \texttt{p, q} will execute at most $max(Pr_p, Pr_q)$ processes
in parallel, because all the parallel processes created by \texttt{p}
will finish before the execution of \texttt{q}. Note also that for the
sequential execution of both \texttt{p} and \texttt{q},
the cost in terms of the \texttt{sthreads} resource is always zero,
because no additional process is created.

\noindent
The analysis sets up the following recurrences for the
\texttt{sthreads} resource and the predicates \texttt{scalar/3} and
\texttt{mult/3} of our example:

\vspace*{-1mm}
\beginequationspace
\begin{equation*}
    \begin{array}{rll}
      \stdub{\mathtt{scalar}}{n,l} = & 0 &  \text{ if } l = 0 \\ [-1mm]
      \stdub{\mathtt{scalar}}{n,l} = & \stdub{\mathtt{mult}}{n} + \stdub{\mathtt{scalar}}{n,l-1} + 1 & \text{ if } l > 0 \\
      \\ [-4mm]
      \stdub{\mathtt{mult}}{n} = & 0 &  \text{ if } n \geq 0 \\
    \end{array}
\end{equation*}

\noindent 
After solving these equations and composing the closed forms, we
obtain the following closed-form functions:

\vspace*{-1mm}
\beginequationspace
\begin{equation*}
    \begin{array}{rll}
      \stdub{\mathtt{scalar}}{n,l} = & l &  \text{ if } n \geq 0 \wedge l \geq 0 \\ 
      \stdub{\mathtt{mult}}{n} = & 0 & \text{ if } n \geq 0 \\ \\ [-3mm]
    \end{array}
\end{equation*}
\finishequationspace

\noindent
As we can see, this predicate will execute, in the worst case, as many
processes as there are elements in the input list.

\end{example}

\section{The Parametric Cost Relations Framework for Sequential Programs}
\label{sec:starting-cost-re-framework}
\secend

The starting point of our work is the standard
general framework described in~\cite{resource-iclp07-short} for setting up parametric relations
representing the resource usage (and size relations) of programs and predicates.
\footnote{
We give equivalent but simpler descriptions
  than in~\cite{resource-iclp07-short},
which are allowed by assuming that programs are the result of a
normalization process that makes all unifications explicit in the
clause body, so that the arguments of the clause head and the body
literals are all unique variables. We also change some notation
for readability and illustrative purposes.}

The framework is doubly parametric: first, the costs inferred are
functions of input data sizes, and second, the framework itself is
parametric with respect to the 
type of approximation made (upper or lower bounds), and to the
resource analyzed. Each concrete resource $\resr$ to be tracked is
defined by two sets of (user-provided) functions, which can be
constants, or general expressions of input data sizes: 

\vspace{-2mm}
\begin{enumerate}
\itemsep=0pt
\item \emph{Head cost 
$\headcost_{[ap,r]}(H)$:} a function that returns the amount of
resource $\resr$ used by the unification of the calling literal
(subgoal) $\predp$ and the head $H$ of a clause matching $\predp$, plus any
preparation for entering a clause (i.e., call and parameter passing
cost).

\item \emph{Predicate cost} $\Psi_{[ap,r]}(\predp, \vecx)$: it is also
possible to define the \emph{full cost} for a particular predicate
$\predp$ for resource $r$ and approximation $ap$, i.e., the function
$\Psi_{[ap,r]}(\predp):\sztypdom \rightarrow \rs$ (with the sizes of
$\predp$'s input data as parameters, $\vecx$) that returns the usage
of resource $\resr$ made by a call to this predicate. This is
specially useful for built-in or external predicates, i.e., predicates
for which the source code is not available and thus cannot be
analyzed, or for providing a more accurate function than analysis can
infer. In the implementation, this information is provided to the
analyzer through \emph{trust assertions}.

\end{enumerate}
For simplicity we only show the equations related to our standard
definition of cost. However, our framework has also been extended to allow
the inference of a more general definition of cost, called accumulated
cost, which is useful for performing static profiling, obtaining a
more detailed information regarding how the cost is distributed among
a set of user-defined \emph{cost centers}. See
\cite{staticprofiling-flops-short,gen-staticprofiling-iclp16-shortest} for more
details.

Consider a predicate $\predp$ defined by clauses $C_1, \ldots,
C_m$. Assume $\vecx$ are the sizes of $\predp$'s input
parameters. Then, the resource usage (expressed in units of resource
$r$ with approximation $ap$) of a call to $\predp$, for an input of
size $\vecx$, denoted as $C_{pred[ap,r]}(\mathtt{p,\vecx})$, can be expressed as:

\begin{equation}
  \label{eq:predcost}
  C_{pred[ap,r]}(\mathtt{p,\vecx}) =
  \mathlarger{\bigodot}_{1 \leq i \leq m}(C_{cl[ap,r]}(C_i,\vecx))
\end{equation}

\noindent
where $\mathlarger{\bigodot} = ClauseAggregator(ap,r)$ is a function that takes an
approximation identifier $ap$ and returns a function which applies
over the cost of all the clauses, $C_{cl[ap,r]}(C_i,\vecx)$, for
$1 < i < m$, in order to obtain the cost of a call to the predicate $\predp$. For
example, if $ap$ is the identifier for approximation "upper bound"
(ub), then a possible conservative definition for
$ClauseAggregator(ub,r)$ is the $\sum$ function. In this case, and
since the number of solutions generated by a predicate that will be
demanded is generally not known in advance, a conservative upper bound
on the computational cost of a predicate is obtained by assuming that
all solutions are needed, and that all clauses are executed (thus the
cost of the predicate is assumed to be the sum of the costs of all of
its clauses). However, it is straightforward to take mutual exclusion
into account to obtain a more precise estimate of the cost of a
predicate, using the maximum of the costs of mutually exclusive groups
of clauses, as done in~\cite{plai-resources-iclp14-short}.

Let us see now how to compute the resource usage of a clause. Consider
a clause $C$ of predicate $\predp$ of the form $H\neck L_1,\ldots,
L_k$ where $L_j, 1 < j < k$, is a literal (either a predicate call, or
an external or builtin predicate), and $H$ is the clause head. Assume
that $\psi_j(\vecx)$ is a tuple with the sizes of all the input
arguments to literal $L_j$, given as functions of the sizes of the
input arguments to the clause head. Note that these $\psi_j(\vecx)$
size relations have previously been computed during size analysis for
all input arguments to literals in the bodies of all clauses.  Then,
the \costrelation{}
for clause $C$ and a single call to $\predp$
(obtaining all solutions), is:
\beginequationspace
\begin{equation}
  \label{eq:clausecost}
  C_{cl[ap,r]}(C,\vecx) = \headcost_{[ap,r]}(head(C)) +\sum_{j=1}^{lim(ap,C)} sols_j(\vecx) \times C_{lit[ap,r]}(L_j,\psi_{j}(\vecx))
\end{equation}
\finishequationspace
\noindent
where $lim(ap,C)$ gives the index of the last body literal that is
called in the execution of clause $C$, $\psi_{j}(\vecx)$ are the sizes of the
input parameters of literal $L_j$, and $sols_j$ represents the product
of the number of solutions produced by the predecessor literals of $L_j$
in the clause body:
\beginequationspace
\begin{equation}
  \label{eq:solcost}
sols_j(\vecx) = \prod_{i = 1}^{j-1}s_{pred}(L_i,\psi_i(\vecx))
\end{equation}
\finishequationspace
\noindent
where $s_{pred}(L_i,\psi_i(\vecx))$ gives the number of solutions produced by $L_i$, with arguments of size $\psi_i(\vecx)$.

Finally, $C_{lit[ap,r]}(L_j, \psi_j(\vecx))$ is replaced by one of the following
expressions, depending on $L_j$:
\vspace*{-3mm}
\begin{itemize}
\item If $L_j$ is a call to a predicate $q$ which is in the same
strongly connected component as $\predp$ (the predicate under analysis),
then $C_{lit[ap,r]}(L_j, \psi_j(\vecx))$ is replaced by the symbolic
call $C_{pred[ap,r]}(\mathtt{q,\psi_j(\vecx)})$, giving rise to a
recurrence relation that needs to be bounded with a closed-form
expression by the solver afterwards.
\item If $L_j$ is a call to a predicate $q$ which is in a different
strongly connected component than $\predp$, then $C_{lit[ap,r]}(L_j,
\psi_j(\vecx))$ is replaced by the closed-form expression that bounds
$C_{pred[ap,r]}(\mathtt{q,\psi_j(\vecx)})$. The analysis guarantees
that this expression has been inferred beforehand, due to the fact
that the analysis is performed for each strongly connected component,
in a reverse topological order.
\item If $L_j$ is a call to a predicate q, whose cost is specified
(with a trust assertion) as $\Psi_{[ap,r]}(q,\vecx)$, then
$C_{lit[ap,r]}(L_j, \psi_j(\vecx))$ is replaced by the expression
$\Psi_{[ap,r]}(q,\psi_j(\vecx))$.
\end{itemize}

\secbeg
\section{Our Extended Resource Analysis Framework for Parallel Programs}
\secend
\label{sec:extended-framework-parallel}
In this section, we
describe how we extend the resource
analysis framework detailed above, in order to handle logic programs
with Independent And-Parallelism, using the binary parallel
\texttt{\&/2} operator.  First, we introduce a new general parameter
that indicates the execution model the analysis has to consider.
For our current prototype, we have defined two different execution
models: standard \emph{sequential} execution, represented by
\texttt{$seq$}, and an abstract parallel execution model, represented
by \texttt{$par(n)$}, where $n \in \mathbb{N} \cup \{\infty\}$.  The
abstract execution model \texttt{$par(\infty)$} is similar to the
\emph{work} and \emph{depth} model, presented
in~\cite{blelloch96nesl-short} and used
extensively in previous work such
as~\cite{hoffman2015-short}. Basically, this model is based on
considering an unbounded number of available processors to infer
bounds on the depth of the computation tree. The \emph{work} measure
is the amount of work to be performed considering a sequential
execution.  These two measures together give an idea on the impact of
the parallelization of a particular program.
The abstract execution model \texttt{$par(n)$}, where $n \in
\mathbb{N}$, assumes a finite number $n$ of processors.

In order to obtain the cost of a predicate, equation~(\ref{eq:predcost})
remains almost identical, with the only difference of adding the new
parameter to indicate the execution model.

Now we address how to set the cost for clauses. In this case, 
equation~(\ref{eq:clausecost}) is extended with the execution model $ex$, and
also the $\Sigma$ default sequential cost aggregation is replaced by a
parametric associative operator $\bigoplus$, that depends on the
resource being defined, the approximation, and the execution model.
For $ex \equiv par(\infty)$ or $ex \equiv seq$,
the following equation is set up:

\vspace*{-3mm}
\beginequationspace
\begin{equation}
  C_{cl[ap,r,ex]}(C,\vecx) = \headcost_{[ap,r]}(head(C)) +\mathlarger{{\bigoplus_{j=1}^{lim(ap,ex,C)}}} (sols_j(\vecx) \times C_{lit[ap,r,ex]}(L_j,\psi_j(\vecx)))
\end{equation}
\finishequationspace

Note that the cost aggregator operators must depend on the resource
$r$ (besides the other parameters). For example, if $r$ is
\emph{execution time}, then the cost of executing two tasks in
parallel must be aggregated by taking the maximun of the execution
times of the two tasks. In contrast, if $r$ is \emph{energy
  consumption}, then the aggregation is the addition of the energy of
the two tasks.

Finally, we extend how the cost of a literal $L_i$, expressed as
$C_{lit[ap,r,ex]}(L_i,\psi_i(\vecx))$, is set up. The previous
definition is extended considering the new case where the literal is a
call to the \emph{meta-predicate} \texttt{\&/2}. In this case, we add
a new parallel aggregation associative operator, denoted by
$\bigotimes$. Concretely, if $L_i = B_1 \& B_2 $, where $B_1$ and
$B_2$ are two sequences of goals, then:

\begin{equation}
C_{lit[ap,r,ex]}(B_1 \& B_2,\vecx) = C_{body[ap,r,ex]}(B_1,\vecx) \bigotimes C_{body[ap,r,ex]}(B_2,\vecx)
\end{equation}

\begin{equation}
  C_{body[ap,r,ex]}(B,\vecx) = \mathlarger{{\bigoplus_{j=1}^{lim(ap,ex,B)}}} (sols_j(\vecx) \times C_{lit[ap,r,ex]}(L^B_j,\psi_j(\vecx)))
\end{equation}

\noindent
where $B = L^B_1,\cdots,L^B_m$.

Consider now the execution model $ex \equiv par(n)$, where $n \in
\mathbb{N}$ (i.e., assuming a finite number $n$ of processors), and a
recursive parallel predicate $\predp$ that creates a parallel task
$\predq_i$ in each recursion $i$. Assume that we are interested in
obtaining an upper bound on the cost of a call to $\predp$, for an
input of size $\vecx$.  We first infer the number $k$ of parallel
tasks created by $\predp$ as a function of $\vecx$.  This can be
easily done by using our cost analysis framework and providing the
suitable assertions for inferring a resource named ``$ptasks$.''
Intuitivelly, the ``counter'' associated to such resource must be
incremented by the (symbolic) execution of the \texttt{\&/2} parallel
operator. More formally, $k = C_{pred[ub,ptasks]}(\predp,\vecx)$.  To
this point, an upper bound $m$ on the number of tasks executed by any
of the $n$ processors is given by $m = \lceil \frac{k}{n} \rceil$.
Then, an upper bound on the cost (in terms of resolution steps, i.e.,
$r=steps$) of a call to $\predp$, for an input of size $\vecx$ can be
given by:

\begin{equation}
 \label{eq:predcost:finite:processors}
  C_{pred[ub,r,par(n)]}(\predp,\vecx) = C^{u} + Spaw^{u}
\end{equation}

\noindent
where $C^{u}$ can be computed in two possible ways: $C^{u} =
\sum_{i=1}^{m} C_{i}^{u}$; or $C^{u} = m \ C_1^{u}$, where $C_i^{u}$
denotes an upper bound on the cost of parallel task $\predq_i$, and
$C_{1}^{u}, \ldots, C_{k}^{u}$ are ordered in descending order of
cost. Each $C_i^{u}$ can be considered as the sum of two components:
$C_i^{u} = Sched_i^{u} + T_i^{u}$, where $Sched_i^{u}$ denotes the
cost from the point in which the parallel subtask $\predq_i$ is
created until its execution is started by a processor (possibly the
same processor that created the subtask), i.e.\ the cost of task
preparation, scheduling, communication overheads, etc.
$T_i^{u}$ denotes the
cost of the execution of $\predq_i$ disregarding all the overheads
mentioned before, i.e., $T_i^{u} =
C_{pred[ub,r,seq]}(\predq,\psi_q(\vecx))$, where $\psi_q(\vecx)$ is a
tuple with the sizes of all the input arguments to predicate $\predq$
in the body of $\predp$.  $Spaw^{u}$ denotes an upper bound on the
cost of creating the $k$ parallel tasks $\predq_i$.  It will be
dependent on the particular system in which $\predp$ is going to be
executed. It can be a constant, or a function of several parameters,
(such as input data size, number of input arguments, or number of
tasks) and can be experimentally determined.

In addition, we propose a method for finding closed-form functions for
cost relations that arise in the analysis of parallel programs, where 
 the $max$ function usually plays a role both as parallel and sequential cost aggregation operation, i.e, as $\bigotimes$ and $\bigoplus$ respectively. In the following subsection, we detail these methods.

\subsection{Solving Cost Recurrence Relations Involving $max$ Operation}

Automatically finding closed-form upper and lower bounds for
recurrence relations is an uncomputable problem.
For some special classes of recurrences, exact solutions are known, for
example for linear recurrences with one variable. For some other
classes, it is possible to apply transformations to fit a class of
recurrences with known solutions, even if this transformation obtains
an appropriate approximation rather than an equivalent expression.

Particularly for the case of analyzing independent and-parallel logic
programs, nonlinear recurrences involving the $max$ operator are quite
common. For example, if we are analyzing elapsed time of a
parallel logic program, a proper parallel aggregation operator is the
maximum between the times elapsed for each literal running in
parallel. To the best of our knowledge, no general solution exists for
recurrences of this particular type. However, in this paper we
identify some common cases of this type of recurrences, for which we
obtain closed forms that are proven to be correct. In this section, we
present these different classes, together with the corresponding
method to obtain a correct bound.

Consider the following function $f: \mathbb{N}^m \to
\mathbb{R}^{+}$, defined as a general form of a first-order
recurrence equation with a $max$ operator:
\begin{equation}
\label{defmaxrec}
f(\vecx) = \begin{cases} 
      max(C,f(\vecx_{|i} - 1)) + D & x_i > \Theta  \\
      B & x_i \leq \Theta 
    \end{cases}
 \end{equation}
\noindent
where $\Theta \in \mathbb{N}$. $C, D$ and $B$ are arbitrary expressions possibly depending on $\vecx$. Note that $\vecx = x_1,x_2,\cdots,x_m $. We define $\vecx_{|i} - 1 = x_1,\cdots,x_i - 1, \cdots, x_m$, for a given $i$, $1 \leq i \leq m$. 
If $C$ and $D$ do not depend on $x_i$, then $C$ and $D$ do not change
through the different recursive instances of $f$. In this case, a
closed-form upper bound is defined by the following theorem (whose
proof is included in~\ref{appendix:proof-one}):

\begin{theorem}
    \label{t1}
  Given $f: \mathbb{N}^m \to \mathbb{N}$ as defined in
  (\ref{defmaxrec}), where $C$ and $D$ are non-decreasing functions of
  $\vecx \setminus x_i$. Then, $\forall \vecx$: 
          \[ f(\vecx) = f^{'}(\vecx) = \begin{cases} 
            max(C,B) + (x_i - \Theta) \cdot D & x_i > \Theta \\
            B  & x_i \leq \Theta  \\
            \end{cases}
          \]
\end{theorem}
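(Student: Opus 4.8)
The plan is to prove the identity $f(\vecx) = f'(\vecx)$ by induction on the recursion variable $x_i$, holding the remaining coordinates $\vecx \setminus x_i$ fixed. This is legitimate precisely because the hypothesis guarantees that $C$ and $D$ do not depend on $x_i$, so the \emph{same} values of $C$ and $D$ occur in every recursive instance reached by decrementing $x_i$. The base case is immediate: when $x_i \leq \Theta$, both $f(\vecx)$ and $f'(\vecx)$ are defined to be $B$ by their respective case splits.

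For the inductive step I would assume $x_i > \Theta$ and that the claim holds at $\vecx_{|i} - 1$, and then distinguish two subcases according to whether the recursive call stays in the recursive region or drops into the base region. If $x_i = \Theta + 1$, then $x_i - 1 = \Theta \leq \Theta$, so $f(\vecx_{|i} - 1) = B$ and hence $f(\vecx) = max(C,B) + D$; this matches $f'(\vecx) = max(C,B) + (x_i - \Theta)\cdot D$ since $x_i - \Theta = 1$. If instead $x_i > \Theta + 1$, the induction hypothesis gives $f(\vecx_{|i} - 1) = max(C,B) + (x_i - 1 - \Theta)\cdot D$, so that
\[
  f(\vecx) = max\bigl(C,\ max(C,B) + (x_i - 1 - \Theta)\cdot D\bigr) + D.
\]

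The crux of the argument is collapsing this nested $max$. Since $max(C,B) \geq C$ and the accumulated term $(x_i - 1 - \Theta)\cdot D$ is nonnegative --- here $x_i - 1 - \Theta \geq 1$ because $x_i > \Theta + 1$, and $D \geq 0$ because $f$ is a cost function taking values in $\mathbb{N}$ whose per-step increment is $D$ --- the inner expression dominates $C$, so the outer $max$ reduces to it. Adding the trailing $+D$ then yields $max(C,B) + (x_i - \Theta)\cdot D = f'(\vecx)$, closing the induction. I expect this step to be the only real obstacle: it is where the sign condition on $D$ and its independence from $x_i$ are essential, whereas the monotonicity of $C$ and $D$ in the other coordinates plays no role once those coordinates are frozen, and everything else is routine bookkeeping across the threshold $\Theta$.
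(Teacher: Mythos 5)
Your proof is correct and follows essentially the same route as the paper's: induction on $x_i$ over the region $x_i > \Theta$, with the base case at $x_i = \Theta + 1$ and the inductive step collapsing $max(C,\, max(C,B) + (x_i - 1 - \Theta)\cdot D)$ to $max(C,B) + (x_i - 1 - \Theta)\cdot D$ before adding the trailing $D$. If anything, you are slightly more careful than the paper, which performs this collapse silently, whereas you justify it explicitly from $max(C,B) \geq C$ and the nonnegativity of $D$.
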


For the case where $C = g(\vecx)$ and $D = h(\vecx)$ are functions
non-decreasing on $x_i$, then the upper bound is given by the
following closed form:

\begin{theorem}
\label{t2}
  Given $f: \mathbb{N}^m \to \mathbb{N}$ as defined in
(\ref{defmaxrec}), where $g$ and $h$ are functions of $\vecx$,
non-decreasing on $x_i$. Then, $\forall \vecx$: 
\[ f(\vecx) \leq  f^\prime(\vecx) = 
  \begin{cases}
    max(g(\vecx),B) + (x_i - \Theta - 1) \times max(g(\vecx),h(\vecx_{|i}-1)) + h(\vecx_{|i}) & x_i > \Theta \\
    B & x_i \leq \Theta
  \end{cases}
\]
\end{theorem}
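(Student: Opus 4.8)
The plan is to prove $f(\vecx)\le f'(\vecx)$ by induction on the recursion variable $x_i$, following the same overall scheme as Theorem~\ref{t1} but now tracking the fact that $C=g(\vecx)$ and $D=h(\vecx)$ change at every level of the recursion. Throughout I write $\vecx'=\vecx_{|i}-1$ for the argument of the recursive call, and I use two ingredients repeatedly: first, that $g$, $h$ and $B$ are non-negative (the codomain of $f$ is $\mathbb{N}$); and second, that $g$ and $h$ are non-decreasing in $x_i$, so that $g(\vecx')\le g(\vecx)$ and, more generally, every $g$- or $h$-value met deeper in the unfolding is dominated by its value at the current $\vecx$.

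First I would dispatch the base cases: for $x_i\le\Theta$ both sides equal $B$, and for $x_i=\Theta+1$ a single unfolding gives $f(\vecx)=max(g(\vecx),B)+h(\vecx)$, which coincides with $f'(\vecx)$ since the middle summand then has coefficient $0$. For the inductive step I assume the bound at $\vecx'$ (whose recursion variable $x_i-1$ still exceeds $\Theta$) and substitute it into one application of (\ref{defmaxrec}); monotonicity of $max$ and of addition yields $f(\vecx)\le max(g(\vecx),f'(\vecx'))+h(\vecx)$. Expanding $f'(\vecx')$ and using the monotonicity facts to replace the deeper $g$- and $h$-values by $g(\vecx)$ and $h(\vecx_{|i}-1)$, the claim reduces to the single inequality
\[ max\bigl(g(\vecx),f'(\vecx')\bigr)\le max(g(\vecx),B)+(x_i-\Theta-1)\,max(g(\vecx),h(\vecx_{|i}-1)). \]

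The hard part will be this last inequality, which I would settle by a case split on the outer $max$. If $g(\vecx)$ is the larger argument, the left-hand side is $g(\vecx)$, and the right-hand side is at least $g(\vecx)$ because $max(g(\vecx),B)\ge g(\vecx)$ and the remaining summand is non-negative — this is exactly where the non-negativity assumption is needed. If instead $f'(\vecx')$ is the larger argument, then after the monotone replacements $f'(\vecx')$ is bounded by $max(g(\vecx),B)+(x_i-\Theta-2)\,max(g(\vecx),h(\vecx_{|i}-1))+h(\vecx_{|i}-1)$, and the inequality collapses to $h(\vecx_{|i}-1)\le max(g(\vecx),h(\vecx_{|i}-1))$, which is immediate. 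This case analysis also explains the shape of the closed form: taking $max(g(\vecx),h(\vecx_{|i}-1))$ rather than $h$ alone as the per-level term is precisely what lets each level absorb the outer $max(g(\vecx),\cdot)$, so that the coefficient $(x_i-\Theta-1)$ counts the unfoldings strictly below the top one while the top-level cost is carried by the separate summand $h(\vecx_{|i})$. As a cross-check one can instead fully unfold the recurrence, replace every occurrence of $g$ by the dominating value $g(\vecx)$, observe that the nested $max$ then telescopes (since $max(g(\vecx),B)$ plus a non-negative quantity is already at least $g(\vecx)$), leaving $max(g(\vecx),B)$ plus a sum of $h$-terms, and finally bound that sum by pulling out $h(\vecx_{|i})$ and majorising the other $x_i-\Theta-1$ terms by $max(g(\vecx),h(\vecx_{|i}-1))$.
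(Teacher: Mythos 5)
Your proof is correct, and it reaches the paper's conclusion by a genuinely different handling of the key step. You share the paper's skeleton — induction on $x_i$ with base case $x_i=\Theta+1$, and the first move $f(\vecx)\le max\bigl(g(\vecx),f^\prime(\vecx_{|i}-1)\bigr)+h(\vecx_{|i})$ via the induction hypothesis and monotonicity of $max$ (the paper's Lemma~\ref{lemmamax2}) — but you then diverge. The paper's proof relies on the subadditivity lemma $max(a,b+c)\le max(a,b)+max(a,c)$ (its Lemma~\ref{lemmamax}) to distribute the outer $max$ across the three summands of $f^\prime(\vecx_{|i}-1)$, and then simplifies each piece separately, including a second distribution of $max$ over the $(x_i-\Theta-2)$-fold product term before applying monotonicity. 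You avoid that lemma entirely: you push monotonicity inside $f^\prime(\vecx_{|i}-1)$ first (replacing the deeper $g$- and $h$-values by $g(\vecx)$ and $h(\vecx_{|i}-1)$) and then settle the single remaining inequality $max\bigl(g(\vecx),f^\prime(\vecx_{|i}-1)\bigr)\le max(g(\vecx),B)+(x_i-\Theta-1)\cdot max\bigl(g(\vecx),h(\vecx_{|i}-1)\bigr)$ by a two-case split on the outer $max$, using non-negativity in one branch and the trivial bound $h(\vecx_{|i}-1)\le max\bigl(g(\vecx),h(\vecx_{|i}-1)\bigr)$ in the other. What each buys: the paper's route is more mechanical and reuses a lemma it states anyway, whereas your case split is more elementary and in fact slightly more robust at the boundary — the paper's intermediate step $max\bigl(g(x+1),(x-\Theta-1)\cdot max(g(x),h(x-1))\bigr)\le (x-\Theta-1)\cdot max\bigl(g(x+1),max(g(x),h(x-1))\bigr)$ is not literally valid when the coefficient $x-\Theta-1$ equals $0$ (it would assert $g(x+1)\le 0$), an edge case your argument handles cleanly since your Case 2 coefficient $x_i-\Theta-2$ is non-negative throughout the inductive step and your Case 1 invokes exactly the non-negativity you flagged. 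Your closing full-unfolding ``cross-check'' is only a sketch, but it is not load-bearing; the inductive argument stands on its own.
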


\noindent
The proof of this Theorem is included in~\ref{appendix:proof-two}.

For the remaining cases, where a $max(e_1, e_2)$ appears, we try to
eliminate the $max$ operator by proving either $e_1 \leq e_2$ or $e_2
\leq e_1$, for any input. In order to do that, we use the function
comparison capabilities of CiaoPP, presented in
\cite{resource-verif-iclp2010-short,resource-verification-tplp18-shortest}. In cases
where $e_1$ and/or $e_2$ contains non-closed recurrence functions, we
use the Z3 SMT solver \cite{z3-shorter} to find, if possible, a proof either for $e_1 \leq
e_2$ or $e_2 \leq e_1$, treating the non-closed functions as
uninterpreted functions, assuming that they are positive and
non-decreasing. As the algorithm used by SMT solvers in this case is
not guaranteed to terminate, we set a timeout. In the worst case,
when no proof is found, then we replace the $max$ operator with an
addition, loosing precision but still finding safe upper bounds.

\secbeg
\section{Implementation and Experimental Results}
\secend
\label{sec:implem-experiments}

We have implemented a prototype of our approach, leveraging the
existing resource usage analysis framework of \ciaopp. The
implementation basically consists of the parameterization of the operators used
for sequential and parallel cost aggregation, i.e., for the aggregation
of the costs corresponding to the arguments of \texttt{,/2} and
\texttt{\&/2}, respectively. This allows the user to define resources
in a general way, taking into account the underlying execution model.

\begin{table*}[t]
\begin{tabular}{r|l}
  \kbd{map\_add1/2}   & Parallel increment by one of each element of a list. \\
  \kbd{fib/2}         & Parallel computation of the nth Fibonacci number. \\
  \kbd{mmatrix/3}     & Parallel matrix multiplication. \\
  \kbd{blur/2}        & Generic parallel image filter. \\
  \kbd{add\_mat/3}    & Matrix addition. \\
  \kbd{intersect/3}   & Set intersection. \\
  \kbd{union/3}       & Set union. \\
  \kbd{diff/3}        & Set difference. \\
  \kbd{dyade/3}       & Dyadic product of two vectors. \\
  \kbd{dyade\_map/3}  & Dyadic product applied on a set of vectors. \\
  \kbd{append\_all/3} & Appends a prefix to each list of a list of lists. \\
\end{tabular}
\caption{Description of the benchmarks.}
\label{fig:descbenchmarks}
\end{table*}

\begin{table*}[t]
  \renewcommand{\times}{\cdot}
  \renewcommand{\cfrac}{\tfrac}
  \scriptsize
    \caption{Resource usage inferred for Independent And-Parallel Programs.}
  \label{tbl:cost-expressions-exact}
  \setlength\tabcolsep{1pt}
  \renewcommand{\arraystretch}{1.1}
  \begin{tabular}{|p{3cm}|r|p{5.5cm}|c|c|}
    \cline{1-5}
\benchmarks & \tabres  & \centering{\colexp} & $\bigo$ &  \coltimes  \\
    \cline{1-5}
    \multirow{4}{*}{\texttt{map\_add1(x)}}
            & \textsf{\seqcost} & $2 \cdot \mlength{x} + 1$ & $\mathcal{O}(\mlength{x})$  & \multirow{4}{*}{35.57} \\
            & \textsf{\parcost} & $2 \cdot \mlength{x} + 1$ & $\mathcal{O}(\mlength{x})$  &  \\
            & \textsf{\parthreads} & $ \mlength{x} $ & $\mathcal{O}(\mlength{x})$  & \\
        \cline{1-5}
    \multirow{4}{*}{\texttt{fib(x)}}
            & \textsf{\seqcost} & $F(\mint{x})+L(\mint{x}) - 1$ &  $\mathcal{O}(2^{\mint{x}})$  & \multirow{4}{*}{52.66} \\
            & \textsf{\parcost} & $\mint{x} + 1$ &  $\mathcal{O}(\mint{x})$ &  \\
            & \textsf{\parthreads} & $F(\mint{x}) + L(\mint{x}) - 1$ &  $\mathcal{O}(2^{\mint{x}})$ &  \\
        \cline{1-5}
  \multirow{4}{*}{\texttt{mmatrix($m_1,n_1,m_2,n_2$)}}
            & \textsf{\seqcost} & $ \mint{n_2} \cdot \mint{m_2} \cdot \mint{m_1}+2 \cdot \mint{m_2} \cdot \mint{m_1} + 2 \cdot \mint{m_1}+1 $ & $\mathcal{O}(\mint{n_2} \cdot \mint{m_2} \cdot \mint{m_1})$ & \multirow{4}{*}{220.9} \\
            & \textsf{\parcost} & $ 2 \cdot \mint{m_1} + \mint{n_1} + 1 $ & $\mathcal{O}(\mint{n_1} + \mint{m_1})$ & \\
            & \textsf{\parthreads} & $ \mint{m_2} \cdot \mint{m_1} + \mint{m_1} $ & $\mathcal{O}(\mint{m_2} \cdot \mint{m_1})$ & \\
        \cline{1-5}
    \multirow{4}{*}{\texttt{blur(m,n)}}
            & \textsf{\seqcost} & $2\cdot\mint{m} \cdot \mint{n}+2 \cdot \mint{n} + 1$ & $\mathcal{O}(\mint{m} \cdot \mint{n})$ & \multirow{4}{*}{123.321} \\
            & \textsf{\parcost} & $ 2 \cdot \mint{m} +  2 \cdot \mint{n} + 1 $ & $\mathcal{O}(\mint{m} +  \mint{n})$ &  \\
            & \textsf{\parthreads} & $ \mint{n} $ & $\mathcal{O}(\mint{n})$ & \\
        \cline{1-5}
    \multirow{4}{*}{\texttt{add\_mat(m,n)}}
            & \textsf{\seqcost} & $\mint{m} \cdot \mint{n} + 2 \cdot \mint{n} + 1$ & $\mathcal{O}(\mint{m} \cdot \mint{n})$ & \multirow{4}{*}{62.72} \\
            & \textsf{\parcost} & $ \mint{m} +  2 \cdot \mint{n} + 1 $ & $\mathcal{O}(\mint{m} + \mint{n})$ &  \\
            & \textsf{\parthreads} & $ \mint{n} $ & $\mathcal{O}(\mint{n})$ & \\
        \cline{1-5}
    \multirow{4}{*}{\texttt{intersect($a,b$)}}
            & \textsf{\seqcost} & $ \mlength{a} \cdot \mlength{b} + 3 \cdot \mlength{a} + 3$ & $\mathcal{O}(\mlength{x})$ & \multirow{4}{*}{191.16} \\
            & \textsf{\parcost} & $ \mlength{b} + 2 \cdot \mlength{a} + 3$ & $\mathcal{O}(\mint{n})$ &  \\
            & \textsf{\parthreads} & $ \mlength{a} $ & $\mathcal{O}(\mlength{x})$ & \\
    \cline{1-5}
    \multirow{4}{*}{\texttt{union($a,b$)}}
            & \textsf{\seqcost} & $ \mlength{a} \cdot \mlength{b} + 3 \cdot \mlength{a} + 3$ & $\mathcal{O}(\mlength{a} \cdot \mlength{b})$ & \multirow{4}{*}{193.37} \\
            & \textsf{\parcost} & $ 2 \cdot \mlength{b} + 2 \cdot \mlength{a} + 3$ & $\mathcal{O}(\mlength{a} + \mlength{b})$ &  \\
            & \textsf{\parthreads} & $ \mlength{a} $ & $\mathcal{O}(\mlength{a})$ & \\
    \cline{1-5}
    \multirow{4}{*}{\texttt{diff($a,b$)}}
            & \textsf{\seqcost} & $ \mlength{a} \cdot \mlength{b} + 3 \cdot \mlength{a} + 3$ & $\mathcal{O}(\mlength{a} \cdot \mlength{b})$ & \multirow{4}{*}{191.16}  \\
            & \textsf{\parcost} & $ \mlength{b} + 2 \cdot \mlength{a} + 3$ & $\mathcal{O}(\mlength{a} + \mlength{b})$ &  \\
            & \textsf{\parthreads} & $ \mlength{a} $ & $\mathcal{O}(\mlength{a})$ & \\
    \cline{1-5}
    \multirow{4}{*}{\texttt{dyade($a,b$)}}
            & \textsf{\seqcost} & $ \mlength{a} \cdot \mlength{b} + 2 \cdot \mlength{a} + 1$ & $\mathcal{O}(\mlength{a} \cdot \mlength{b})$ & \multirow{4}{*}{71.08} \\
            & \textsf{\parcost} & $ \mlength{b} + \mlength{a} + 1$ & $\mathcal{O}(\mlength{a} + \mlength{b})$ &  \\
            & \textsf{\parthreads} & $ \mlength{a} $ & $\mathcal{O}(\mlength{a})$ & \\
    \cline{1-5}
    \multirow{4}{*}{\texttt{dyade\_map($l,m$)}}
            & \textsf{\seqcost} & $ \mint{max(m)} \cdot \mlength{m} \cdot \mlength{l}+2 \cdot \mlength{m} \cdot \mlength{l}+2 \cdot \mlength{m}+1 $ & $\mathcal{O}(\mint{max(m)} \cdot \mlength{m} \cdot \mlength{l})$ & \multirow{4}{*}{248.39} \\
            & \textsf{\parcost} & $ \mint{max(m)} + \mlength{m} + \mlength{l} + 1  $ & $\mathcal{O}(\mint{max(m)} + \mlength{m} + \mlength{l})$ & \\
            & \textsf{\parthreads} & $ \mlength{l} \cdot \mlength{m} + \mlength{l}  $ & $\mathcal{O}(\mlength{m} \cdot \mlength{l})$ & \\
    \cline{1-5}
    \multirow{4}{*}{\texttt{append\_all($l,m$)}}
            & \textsf{\seqcost} & $ \mlength{l} \cdot \mlength{m} + 2 \cdot \mlength{m} + 1 $ & $\mathcal{O}(\mlength{l} \cdot \mlength{m})$ & \multirow{4}{*}{108.4} \\
            & \textsf{\parcost} & $ \mlength{l} + \mlength{m} + 1 $ & $\mathcal{O}(\mlength{l} + \mlength{m})$ &  \\
            & \textsf{\parthreads} & $ \mlength{m} $ & $\mathcal{O}(\mlength{m})$ & \\
    \cline{1-5}
\end{tabular}
\label{table:ExpResults}
\begin{minipage}{\textwidth}  
\scriptsize 
\begin{itemize}
\item $F(n)$ represents the nth element of the Fibonacci sequence.
\item $L(n)$ represents the nth Lucas number.
\item $l_n, i_n$ represent the size of $n$ in terms of the metrics
\emph{length} and \emph{int}, respectively.
\end{itemize}
\end{minipage}
\end{table*}

\begin{table*}[t]
  \renewcommand{\times}{\cdot}
  \renewcommand{\cfrac}{\tfrac}
  \scriptsize
  \caption{Resource usage inferred for a bounded number of processors.}
  \label{tbl:cost-expressions-numberprocessors}
  \setlength\tabcolsep{1pt}
  \renewcommand{\arraystretch}{1.1}
  \begin{tabular}{|p{3cm}|p{5.5cm}|c|c|}
    \cline{1-4}
\benchmarks & \centering{\colexp} & $\bigo$ &  \coltimes  \\
    \cline{1-4}
    \texttt{map\_add1(x)}
            & $  2 \cdot \lceil \frac{\mlength{x}}{p} \rceil + 1 $ & $\mathcal{O}(\lceil \frac{\mlength{x}}{p} \rceil)$ & 54.36 \\
    \texttt{blur(m,n)}
              & $ 2 \cdot \lceil \frac{\mint{n}}{p} \rceil \cdot \mint{m} +2 \cdot \lceil \frac{\mint{n}}{p} \rceil  + 1 $ & $\mathcal{O}(\lceil \frac{\mint{n}}{p}  \rceil \cdot \mint{m})$  & 205.97 \\
    \texttt{add\_mat(m,n)}
            & $ \lceil \frac{\mint{n}}{p} \rceil \cdot \mint{m} + 2 \cdot \lceil \frac{\mint{n}}{p} \rceil + 1 $ &   $\mathcal{O}(\lceil \frac{\mint{n}}{p}  \rceil \cdot \mint{m})$  & 185.89  \\
     \texttt{intersect($a,b$)}
              & $ \lceil \frac{\mlength{a}}{p} \rceil \cdot \mlength{b} + 2 \cdot \lceil \frac{\mlength{a}}{p} \rceil + \mlength{a} + 2 $ & $\mathcal{O}(\lceil \frac{\mlength{a}}{p}  \rceil \cdot \mlength{b})$  & 330.47 \\
     \texttt{union($a,b$)}
              & $ \lceil \frac{\mlength{a}}{p} \rceil \cdot \mlength{b} + 2 \cdot \lceil \frac{\mlength{a}}{p} \rceil + \mlength{a} + \mlength{b} + 2 $ & $\mathcal{O}(\lceil \frac{\mlength{a}}{p}  \rceil \cdot \mlength{b})$ & 311.3 \\
     \texttt{diff($a,b$)}
              & $ \lceil \frac{\mlength{a}}{p} \rceil \cdot \mlength{b} + 2 \cdot \lceil \frac{\mlength{a}}{p} \rceil + \mlength{a} + 2 $ & $\mathcal{O}(\lceil \frac{\mlength{a}}{p}  \rceil \cdot \mlength{b})$ & 339.01 \\
    \texttt{dyade($a,b$)}
              & $ \lceil \frac{\mlength{a}}{p} \rceil \cdot \mlength{b} + 2 \cdot \lceil \frac{\mlength{a}}{p} \rceil + 1$ & $\mathcal{O}(\lceil \frac{\mlength{a}}{p}  \rceil \cdot \mlength{b})$ & 120.93 \\
     \texttt{append\_all($l,m$)}
              & $ \lceil \frac{\mlength{m}}{p} \rceil \cdot \mlength{l} + 2 \cdot \lceil \frac{\mlength{m}}{p} \rceil + 1 $ & $\mathcal{O}(\lceil \frac{\mlength{m}}{p}  \rceil \cdot \mlength{l})$ & 117.8 \\
    \cline{1-4}
\end{tabular}
\begin{minipage}{\textwidth}  
\scriptsize 
$\predp$ is defined as the minimum between the number of processors and $\parthreads$.
\end{minipage}
\end{table*}

We selected a set of benchmarks that exhibit different common
parallel patterns, briefly described in
Table~\ref{fig:descbenchmarks}, together with the definition of a set
of resources that help understand the overall behavior of the
parallelization.\footnote{We will be able to extend the experiments to a bigger
  set of benchmarks for the talk at the conference and the
  post-proceedings submission if the paper is accepted.}
Table~\ref{table:ExpResults} shows some results of the experiments that
we have performed with our prototype implementation.  Column
\benchmarks{} shows the main predicates analyzed for each
benchmark. Set operations (\texttt{intersect}, \texttt{union} and
\texttt{diff}), as well as the programs \texttt{append\_all},
\texttt{dyade} and \texttt{add\_mat}, are Prolog versions of the
benchmarks analyzed in~\cite{hoffman2015-short}, which is the closest
related work we are aware of.

Column~\tabres{} indicates the name of each of the resources inferred
for each benchmark: \emph{sequential resolution steps} (\seqcost),
\emph{parallel resolution steps} assuming an unbounded number of
processors (\parcost), and \emph{maximum number of processes executing
  in parallel} (\parthreads). The latter gives an indication of the
maximum parallelism that can potentially be exploited.
Column~\colexp{} shows the upper bounds obtained for each of the
resources indicated in Column~\tabres.  While in the experiments both
upper and lower bounds were inferred, for the sake of brevity, we only
show upper bound functions.  Column~\bigo{} shows the complexity
order, in big O notation, corresponding to each resource.  For all the
benchmarks in Table~\ref{table:ExpResults}
we obtain the exact complexity orders.
We also obtain the same complexity order as
in~\cite{hoffman2015-short} for the Prolog versions of the benchmarks taken
from that work.
Finally, Column~\atime{} shows the analysis times in milliseconds,
which are quite reasonable.
The results show that most of the benchmarks have different
asymptotic behavior in the 
sequential and parallel execution models. In particular, for
\texttt{fib(x)}, the analysis infers an exponential upper bound for
sequential execution steps, and a linear upper bound for parallel
execution steps. As mentioned before, this is an upper bound for an
ideal case, assuming an unbounded number of
processors. Nevertheless, such upper-bound information is useful for
understanding how the cost behavior evolves in architectures with
different levels of parallelism. In addition, this \emph{dual} cost
measure can be combined together with a bound on the number of
processors in order to obtain a general asymptotic upper bound (see
for example Brent's Theorem~\cite{harper2016}, which is also mentioned
in~\cite{hoffman2015-short}). The program \texttt{map\_add1(l)} exhibits a
different behavior: both sequential and parallel upper bounds are
linear. This happens because we are considering \emph{resolution
  steps}, i.e., we are counting each head unification produced from an
initial call \texttt{map\_add1(l)}. Even under the parallel execution
model, we have a chain of head unifications whose length depends
linearly on the length of the input list. It follows from the results
of this particular case that the parallelization will not be useful
for improving the number of resolution steps performed in parallel.

Another useful information inferred in our experiments is the maximum
number of processes that can be executed in parallel, represented by
the resource named \parthreads. We can see that for most of our
examples the analysis obtains a linear upper bound for this resource,
in terms of the size of some of the inputs. For example, the execution
of \texttt{intersect(a,b)} (parallel set intersection) will create
\emph{at most} $\mlength{a}$ processes, where $\mlength{a}$ represents
the length of the list $a$. For other examples, the analysis shows a
quadratic upper bound (as in \texttt{mmatrix}), or even exponential
bounds (as in \texttt{fib}).
The information about upper bounds on the maximum level of parallelism
required by a program is useful for understanding its scalability in
different parallel architectures, or for optimizing the number of
processors that a particular call will use, depending on the size of
the input data.

Finally, the results of our experiments considering a bounded number
of processors are shown in
Table~\ref{tbl:cost-expressions-numberprocessors}.

\secbeg
\section{Related Work}
\secend
\label{sec:related-work}

Our approach is an extension of an existing cost analysis framework
 for sequential logic programs \cite{low-bounds-ilps97-short,staticprofiling-flops-short,resource-verification-tplp18-shortest}, 
 which extends the classical cost analysis 
 techniques based on
 setting up and solving
recurrence relations, pioneered by~\cite{Wegbreit75-short-plus}, with 
solutions for relations involving ~\texttt{max} and
\texttt{min} functions.  The framework handles characteristics such
as backtracking, multiple solutions (i.e., non-determinism),  
failure, and inference of both upper and lower bounds including non-polynomial bounds.
These features are inherited by our approach, and are absent from 
other approaches to parallel cost analysis in the literature.

The most closely-related work to our approach is~\cite{hoffman2015-short}, 
which describes  an automatic analysis for
deriving bounds on the worst-case evaluation cost of first order
functional programs. The analysis derives bounds under an abstract \emph{dual}
cost model based on two measures: \emph{work} and \emph{depth}, which
over-approximate the sequential and parallel evaluation cost of
programs, respectively, considering an unlimited number of
processors. Such an abstract cost model was introduced
by~\cite{blelloch96nesl-short} to formally analyze parallel programs.
The work is based on type judgments 
annotated with a cost metric, which generate a set of inequalities
which are then solved by linear programming techniques.  Their analysis is
only able to infer multivariate resource polynomial bounds, while
non-polynomial bounds are left as future
work. 
In~\cite{Hoefler:2014:ACA:2612669.2612685-short} the authors propose an
automatic analysis based on the \emph{work} and \emph{depth} model,
for a simple imperative language with explicit parallel loops.

There are other approaches to cost analysis of parallel and distributed 
systems, based on different models of computation than the independent
and-parallel model in our work. 
In~\cite{AlbertAGZ11-short} the authors present a static analysis which is able to
infer upper bounds on the maximum number of \emph{active} (i.e, not
finished nor suspended) processes running in parallel, and the total
number of processes created for imperative \emph{async-finish}
parallel programs. 
The approach described in~\cite{costabs-11-semi-short} uses recurrence (cost)
relations to derive upper bounds on the cost of concurrent
object-oriented programs, with shared-memory communication and future
variables. They address concurrent execution for loops with a
semi-controlled scheduling, i.e., with no arbitrary
interleavings. In~\cite{Albert:2018:PCA:3293495.3274278-short} the authors address
the cost of parallel execution of object-oriented distributed programs. The approach 
is to identify the synchronization points in the program, use serial cost analysis
of the blocks between these points, and then, exploiting techniques just mentioned, construct a graph structure
to capture the possible parallel execution of the program. The path of
maximal cost is then computed.  The allocation of tasks to processors (called ``locations") is part of
the program in these works, and so although independent and-parallel programs could be modelled 
in this computation style, it is not
directly comparable to our more abstract model of parallelism.

Solving, or safely bounding recurrence relations with~\texttt{max} and
\texttt{min} functions has been addressed mainly for recurrences
derived from divide-and-conquer
algorithms~\cite{Alonso:1995:MDM:213240.213251-short,WANG2000377-short,HWANG20031475-short}.
In our experience, our method is able to obtain more accurate bounds.

\secbeg
\section{Conclusions}
\secend
\label{sec:conclusions}
We have presented a novel, general, and flexible analysis framework
that can be instantiated for estimating the resource usage of parallel
logic programs, for a wide range of resources, platforms, and
execution models. To the best of our knowledge, this is the first
approach to the cost analysis of \emph{parallel logic programs}. Such
estimations include both lower and upper bounds, given as functions on
input data sizes. In addition, our analysis also infers other
information which is useful for improving the exploitation and
assessing the potential and actual parallelism of a program. We have
also developed a method for solving the cost relations that arise in
this particular type of analysis, which involve the $max$
function. Finally, we have developed a prototype implementation of our
general framework, instantiated it for the analysis of logic programs
with Independent And-Parallelism, and performed an experimental
evaluation, obtaining very encouraging results w.r.t. accuracy and
efficiency.

\vspace*{4mm}
\noindent
\textbf{Acknowledgements:} Research partially funded by EU FP7
agreement no 318337 \emph{ENTRA}, Spanish MINECO TIN2015-67522-C3-1-R
\emph{TRACES} project, the Madrid M141047003 \emph{N-GREENS} program
and \emph{BLOQUES-CM} project, and the TEZOS Foundation \emph{TEZOS}
project.

\begin{small}

\end{small}

\appendix

\clearpage 
  \vspace*{-20mm}

  \centerline{{\LARGE \textbf{Appendices}}}

  \ \\ [-3mm]

\section{Proof for Theorem \ref{t1}}
\label{appendix:proof-one}

\noindent
\begin{ntheorem}
  Given $f: \mathbb{N}^m \to \mathbb{N}$ as defined in
  (\ref{defmaxrec}), where $C$ and $D$ are non-decreasing functions of
  $\vecx \setminus x_i$. Then, $\forall \vecx$: 
          \[ f(\vecx) = f^{'}(\vecx) = \begin{cases} 
            max(C,B) + (x_i - \Theta + 1) \cdot D & x_i > \Theta \\
            B  & x_i \leq \Theta  \\
            \end{cases}
          \]
\end{ntheorem}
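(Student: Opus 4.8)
The plan is to exploit the hypothesis that $C$ and $D$ do not depend on the recursion variable $x_i$: along the chain of recursive calls only $x_i$ decreases while the remaining coordinates stay fixed, so $C$, $D$ and $B$ are constant throughout the unfolding. I would therefore fix $\vecx\setminus x_i$ and treat $f$ as an ordinary one-variable recurrence in $x_i$, proceeding by induction on the number of steps $k = x_i - \Theta$ taken above the threshold. The base case $x_i \le \Theta$ is immediate, since both $f$ and $f'$ equal $B$ there.

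The crux is a \emph{saturation} argument for the inner $\max$. After a single unfolding from the base, at $x_i=\Theta+1$ the recursive argument is $f(\vecx_{|i}-1) = B$, so $f$ equals $\max(C,B)+D$; for every deeper instance $f(\vecx_{|i}-1)$ is at least $C$, because $\max(C,B) \ge C$ and $D \ge 0$ (the codomain is $\mathbb{N}$). Hence from the second step onwards $\max(C, f(\vecx_{|i}-1)) = f(\vecx_{|i}-1)$ and the defining equation collapses to the linear recurrence $f = f(\vecx_{|i}-1) + D$. Establishing the inequality $f(\vecx_{|i}-1) \ge C$ uniformly, so that the $\max$ is always resolved in favour of the recursive term, is the step I expect to require the most care, and it is exactly where non-negativity of $D$ and the identity $\max(C,B)\ge C$ are used.

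With the $\max$ resolved, a routine induction yields $f = \max(C,B) + k\,D$, where $k$ counts the unfoldings performed above the threshold, i.e.\ $k = x_i - \Theta$. Equivalently --- and this is the cleanest way to record the argument formally --- I would take the candidate $f'$ of the statement and verify directly that it satisfies both the boundary condition ($f' = B$ for $x_i \le \Theta$) and the recurrence for $x_i > \Theta$; since the recurrence together with its boundary values determines a unique function, matching them proves $f = f'$. The verification of the recurrence step again reduces to the same saturation fact, namely that $\max\bigl(C, f'(\vecx_{|i}-1)\bigr) = f'(\vecx_{|i}-1)$ once $x_i - 1 > \Theta$.

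Finally, I note that the monotonicity of $C$ and $D$ in the coordinates $\vecx\setminus x_i$ is not actually consumed in this equality proof, precisely because those coordinates never change along the recursion; that hypothesis becomes relevant only in the companion bound of Theorem~\ref{t2}, where $C$ and $D$ do vary with $x_i$ and the $\max$ can no longer be resolved exactly, so that one settles for an inequality rather than an identity.
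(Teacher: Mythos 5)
Your proposal is correct and takes essentially the same route as the paper's proof: induction on $x_i$ over the range $x_i \geq \Theta + 1$, where the crux is exactly your saturation observation --- the paper resolves $\max(C,\,\max(C,B) + (x_i - \Theta)\cdot D) = \max(C,B) + (x_i - \Theta)\cdot D$ inside its inductive step, which is the same absorption you isolate up front using $\max(C,B) \geq C$ and $D \geq 0$, and your closing remark that the monotonicity of $C$ and $D$ in $\vecx \setminus x_i$ is never actually consumed (only their independence of $x_i$ matters) holds for the paper's proof as well. One caveat you should flag explicitly: your induction yields $f(\vecx) = \max(C,B) + (x_i - \Theta)\cdot D$, which agrees with the paper's own proof and with the main-text statement of Theorem~\ref{t1}, but the statement as handed to you (the appendix restatement) reads $(x_i - \Theta + 1)\cdot D$ --- an off-by-one typo, as one sees at $x_i = \Theta + 1$ where unfolding gives $f = \max(C,B) + D$; consequently your alternative plan of verifying that the \emph{stated} candidate $f'$ satisfies the recurrence and boundary values would fail at that base step (unless $D = 0$), so the discrepancy between your derived closed form and the stated one must be noted rather than silently assumed away.
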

\begin{proof}
The proof for the case $x_i \leq \Theta$ is trivial.

\noindent
In the following, we prove the theorem for $x_i > \Theta$, or equivalently,
for $x_i \geq \Theta + 1$. The proof is by induction on this subset. \\

\noindent
\textbf{Base Case.} We have to prove that $f(x_1,\cdots,x_{i-1},\Theta+1,\cdots,x_m) = f^\prime(x_1,\cdots,x_{i-1},\Theta+1,\cdots,x_m)$. Using the definition of $f$ and $f^\prime$ we have
that

\begin{equation*}
\begin{split}
  f(x_1,\cdots,x_{i-1},\Theta+1,\cdots,x_m) & = max(C,f(x_1,\cdots,x_{i-1},\Theta,\cdots,x_m)) + D \\
  & = max(C,B) + D \\
  f^{\prime}(x_1,\cdots,x_{i-1},\Theta+1,\cdots,x_m) & = max(C,B) + (\Theta + 1 - \Theta) \cdot D \\
  & = max(C,B) + D
\end{split}
\end{equation*}

\noindent

\noindent
\textbf{General Case.} Assuming \\ $f(x_1,\cdots,x_{i-1},x_i,\cdots,x_m) = f^{\prime}(x_1,\cdots,x_{i-1},x_i,\cdots,x_m)$, we need to
prove that $f(x_1,\cdots,x_{i-1},x_i+1,\cdots,x_m) = f^{\prime}(x_1,\cdots,x_{i-1},x_i+1,\cdots,x_m)$.
By induction hypothesis we have that:

\begin{equation*}
\begin{split}
  f(x_1,\cdots,x_{i-1},x_i + 1,\cdots,x_m)  & = max(C,f(x_1,\cdots,x_{i-1},x_i,\cdots,x_m)) + D \\ 
  & = max(C,max(C,B) + (x_i - \Theta) \cdot D) + D \\
  & = max(C,B) + (x_i - \Theta) \cdot D + D \\
  & = max(C,B) + (x_i  - \Theta + 1) \cdot D \\
  & = f^{\prime}(x_1,\cdots,x_{i-1},x_i + 1,\cdots,x_m)
\end{split}
\end{equation*}

\end{proof}

\clearpage
\section{Proof of Theorem \ref{t2}}
\label{appendix:proof-two}
For all $a, b, c \in \mathbb{N}\cup \{0\}$, the following properties hold:
\begin{itemize}
    \item Commutative: $max(a,b) = max(b,a)$
    \item Associative: $max(a,max(b,c)) = max(max(a,b),c)$
    \item Idempotent: $max(a,a) = a$
\end{itemize}

\begin{lemma}
\label{lemmamax}
$\forall a, b, c \in \mathbb{N}: max(a,b + c) \leq max(a,b) + max(a,c)$
\end{lemma}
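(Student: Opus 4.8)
The plan is to prove the inequality by a direct bounding argument, avoiding (or at most making trivial) any case split on which argument realizes $max(a,b+c)$. First I would recall the two elementary facts about $max$ over the nonnegative integers that do all the work: for any $x,y$, both $max(x,y) \geq x$ and $max(x,y) \geq y$; and, dually, whenever $x \leq z$ and $y \leq z$ one has $max(x,y) \leq z$. Here the common upper bound $z$ will be the right-hand side $max(a,b) + max(a,c)$, and the whole strategy is simply to show that both arguments of the left-hand $max$, namely $a$ and $b+c$, are at most $z$.

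For the argument $b+c$, I would use $max(a,b) \geq b$ and $max(a,c) \geq c$, which add to give $b + c \leq max(a,b) + max(a,c)$. For the argument $a$, I would use $max(a,b) \geq a$ together with the nonnegativity of $c$, so that $max(a,c) \geq 0$, giving $a \leq max(a,b) + max(a,c)$. Since both $a$ and $b+c$ are then bounded above by the single value $max(a,b)+max(a,c)$, the upper-bound property of $max$ yields $max(a,b+c) \leq max(a,b)+max(a,c)$, as required.

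The only point that needs any care — and the sole place where the hypothesis $a,b,c \in \mathbb{N}$ does more than guarantee the quantities are well defined — is the nonnegativity invoked when bounding the argument $a$: it is precisely $max(a,c) \geq 0$ that lets the single term $max(a,b)$ already dominate $a$. I therefore do not expect a genuine obstacle; the statement is an elementary monotonicity/subadditivity fact and the proof should occupy only a few lines. Its role, I anticipate, is as an auxiliary lemma feeding into the more substantial Theorem~\ref{t2}, where a term of the shape $max(C,\,b+c)$ produced by unfolding the recurrence must be distributed across the additive structure of the proposed closed form.
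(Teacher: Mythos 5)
Your proof is correct, but there is nothing in the paper to compare it against: the paper states Lemma~\ref{lemmamax} (and Lemma~\ref{lemmamax2}) without any proof, using them directly in the proof of Theorem~\ref{t2}. Your argument supplies the missing justification, and it does so cleanly: bounding both arguments of the left-hand $max$ by the common value $max(a,b)+max(a,c)$ — via $max(a,b) \geq b$, $max(a,c) \geq c$ for the argument $b+c$, and $max(a,b) \geq a$, $max(a,c) \geq 0$ for the argument $a$ — and then invoking the least-upper-bound property of $max$ is exactly the standard route. Your remark that nonnegativity is genuinely needed (and not just for well-definedness) is also accurate: the inequality fails over the integers, e.g.\ $a=b=c=-1$ gives $max(-1,-2)=-1 > -2 = max(-1,-1)+max(-1,-1)$, so the restriction to $\mathbb{N}$ is essential precisely where you use $max(a,c)\geq 0$. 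Your anticipated use of the lemma is likewise correct: in the proof of Theorem~\ref{t2} it is applied to distribute $max(g(x+1),\cdot)$ over the three-term sum obtained by unfolding the recurrence, yielding inequality~(\ref{eq1}).
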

\begin{lemma}
\label{lemmamax2}
$\forall a, b, c, d \in \mathbb{N}: a \leq c \wedge b \leq d \implies max(a,b) \leq max(c,d)$
\end{lemma}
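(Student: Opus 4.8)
The plan is to prove the stated monotonicity of the binary $max$ operator directly, without any induction, since it is a purely arithmetic fact relating two pairwise comparisons to a comparison of maxima. The cleanest route I would take relies on the least-upper-bound characterization of $max$: for all $x,y,z \in \mathbb{N}$, one has $max(x,y) \leq z$ if and only if both $x \leq z$ and $y \leq z$. This is really the defining property of $max$ as a supremum of two elements, and it is what makes the lemma essentially immediate.

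Concretely, first I would record the two trivial bounds $c \leq max(c,d)$ and $d \leq max(c,d)$, which hold by definition of $max$. Combining these with the hypotheses $a \leq c$ and $b \leq d$ by transitivity of $\leq$ yields $a \leq max(c,d)$ and $b \leq max(c,d)$. At this point both arguments of $max(a,b)$ are bounded above by the single quantity $max(c,d)$, so applying the least-upper-bound property with $z = max(c,d)$ gives $max(a,b) \leq max(c,d)$, which is exactly the claim.

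An equivalent and perhaps more elementary presentation, which I would keep in reserve in case a more self-contained argument is preferred, splits on which argument attains $max(a,b)$: if $max(a,b)=a$ then $max(a,b)=a\leq c\leq max(c,d)$, and if $max(a,b)=b$ then $max(a,b)=b\leq d\leq max(c,d)$; either branch closes the proof. The hypotheses $a \leq c$ and $b \leq d$ are used exactly once each, one per branch.

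There is no genuine obstacle here: the statement is an elementary monotonicity fact, and the only thing worth being careful about is keeping the chain of inequalities explicit so that it plugs cleanly into the intended downstream use. In the induction step of Theorem~\ref{t2}, this lemma is invoked (alongside Lemma~\ref{lemmamax}) to bound a $max$ of two increasing expressions by the $max$ of their respective upper bounds, so stating it in the pairwise-dominance form above is precisely what is needed there.
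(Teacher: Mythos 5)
Your proof is correct. Note that the paper itself states Lemma~\ref{lemmamax2} \emph{without proof} --- it appears in the appendix only as an auxiliary fact invoked in the induction step of Theorem~\ref{t2} --- so there is no paper argument for you to match or diverge from; your write-up simply fills a gap the authors left implicit. Both of your routes are valid: the least-upper-bound characterization ($max(x,y) \leq z$ iff $x \leq z$ and $y \leq z$, combined with $c \leq max(c,d)$, $d \leq max(c,d)$ and transitivity) and the fallback case split on which argument attains $max(a,b)$. Your closing remark about the downstream use is also accurate: in the proof of Theorem~\ref{t2} the lemma is applied exactly in this pairwise-dominance form, e.g.\ to replace $f(x)$ by its inductive upper bound inside a $max$, and to replace $h(x-1)$ by $h(x)$ using monotonicity of $h$.
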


\noindent

\begin{ntheorem}
  Given $f: \mathbb{N}^m \to \mathbb{N}$ as defined in
(\ref{defmaxrec}), where $g$ and $h$ are functions of $\vecx$,
non-decreasing on $x_i$. Then, $\forall \vecx$: 
\[ f(\vecx) \leq  f^\prime(\vecx) = 
  \begin{cases}
    max(g(\vecx),B) + (x_i - \Theta - 1) \times max(g(\vecx),h(\vecx_{|i}-1)) + h(\vecx_{|i}) & x_i > \Theta \\
    B & x_i \leq \Theta
  \end{cases}
\]
\end{ntheorem}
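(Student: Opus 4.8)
The plan is to prove the bound by induction on $x_i$ over the range $x_i \geq \Theta + 1$, mirroring the structure of the proof of Theorem~\ref{t1} in Appendix~\ref{appendix:proof-one}, but now replacing the equalities by inequalities and invoking the two auxiliary facts Lemma~\ref{lemmamax} and Lemma~\ref{lemmamax2}. To lighten the notation I would fix every component of $\vecx$ except $x_i$ and write $F(k)$, $G(k)$, $H(k)$ for the values of $f$, $g$, $h$ at $x_i = k$; the recurrence then reads $F(k) = max(G(k), F(k-1)) + H(k)$ for $k > \Theta$ with $F(\Theta) = B$, where $G$ and $H$ are non-decreasing, and $h(\vecx_{|i}-1) = H(k-1)$, $h(\vecx_{|i}) = H(k)$. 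The target bound becomes $F(k) \leq F'(k) = max(G(k),B) + (k - \Theta - 1)\,max(G(k),H(k-1)) + H(k)$. For the base case $k = \Theta + 1$, a single unfolding gives $F(\Theta+1) = max(G(\Theta+1), B) + H(\Theta+1)$, which coincides with $F'(\Theta+1)$ because the coefficient $k-\Theta-1$ vanishes, so the inequality holds with equality.

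For the inductive step, assuming $F(k) \leq F'(k)$, I would unfold $F(k+1) = max(G(k+1), F(k)) + H(k+1)$ and use monotonicity of $max$ (Lemma~\ref{lemmamax2}) together with the induction hypothesis to replace $F(k)$ by $F'(k)$. After cancelling the common summand $H(k+1)$, the goal reduces to showing $max(G(k+1), F'(k)) \leq max(G(k+1), B) + (k-\Theta)\,max(G(k+1), H(k))$, the right-hand side being exactly $F'(k+1) - H(k+1)$.

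The heart of the argument, and the step I expect to be the main obstacle, is bounding $max(G(k+1), F'(k))$, since $F'(k)$ is itself a sum of $k - \Theta + 1$ terms. Here I would apply the sub-distributivity of $max$ over addition (Lemma~\ref{lemmamax}) in its iterated form $max(a, \sum_j t_j) \leq \sum_j max(a, t_j)$ to push $G(k+1)$ inside the sum, producing $max(G(k+1), B)$, then $(k-\Theta-1)$ copies of $max(G(k+1), H(k-1))$, and finally $max(G(k+1), H(k))$, where I have used $G(k+1) \geq G(k)$ to collapse each nested $max(G(k+1), G(k), \cdot)$ down to $max(G(k+1), \cdot)$. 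The remaining work is a purely arithmetic reconciliation of the linear coefficient: since $H(k) \geq H(k-1)$, Lemma~\ref{lemmamax2} yields $max(G(k+1), H(k-1)) \leq max(G(k+1), H(k))$, so the $(k-\Theta-1)$ copies of the former plus the single copy of $max(G(k+1), H(k))$ merge into $(k-\Theta)$ copies of $max(G(k+1), H(k))$, giving precisely $F'(k+1)$.

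The subtle bookkeeping lies in tracking the index shifts ($H(k-1)$ versus $H(k)$) and in verifying that the coefficient increments correctly from $k-\Theta-1$ to $k-\Theta$. The monotonicity hypotheses on $g$ and $h$ in $x_i$ are used exactly twice, once to collapse the $g$-maxima and once to majorize the $h$-term, and it is these hypotheses that make the iterated use of Lemma~\ref{lemmamax} terminate in a clean closed form instead of an ever-growing nest of $max$ operators.
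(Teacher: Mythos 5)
Your proof is correct and takes essentially the same route as the paper's: induction on $x_i \geq \Theta + 1$ with the base case holding with equality, Lemma~\ref{lemmamax2} plus the induction hypothesis to replace $f$ by $f^\prime$, iterated Lemma~\ref{lemmamax} to distribute $max(g,\cdot)$ over the three-part sum, monotonicity of $g$ to collapse the nested maxima, monotonicity of $h$ to pass from $h(\vecx_{|i}-1)$ to $h(\vecx_{|i})$, and the same merge of the coefficient from $x_i - \Theta - 1$ to $x_i - \Theta$. The only cosmetic difference is that you expand the $(k-\Theta-1) \times max(\cdot,\cdot)$ term into individual copies before pushing the outer $max$ inside, which if anything handles the zero-coefficient case more cleanly than the paper's direct step $max(a, n \cdot b) \leq n \cdot max(a,b)$.
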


\begin{proof}
The proof for the case $x_i \leq \Theta$ is trivial.

\noindent
In the following, we prove the theorem for $x_i > \Theta$, or equivalently,
for $x_i \geq \Theta + 1$. The proof is by induction on this subset.
For brevity, we only show the argument corresponding to the position
of $x_i$ in $\vecx$. However, the proof is still valid considering all
of the arguments.
\\

\noindent

\textbf{Base Case.} We have to prove that $f(\Theta+1) \leq f^\prime(\Theta+1)$. Using the definition of $f$ and $f^\prime$ we have that

\begin{equation*}
\begin{split}
  f(\Theta+1) & = max(g(\Theta+1),f(\Theta)) + h(\Theta+1) \\
   & = max(g(\Theta+1),B) + h(\Theta+1) \\
   f^\prime(\Theta+1) & = max(g(\Theta+1),B) + ((\Theta+1)-\Theta-1)\times max(g(\Theta+1),h(\Theta)) + h(\Theta+1) \\
   & = max(g(\Theta+1),B) + h(\Theta+1)
\end{split}
\end{equation*}

\noindent
\textbf{General Case.}  Assuming $f(x) \leq f^\prime(x)$, we need to
prove that $f(x+1) \leq f^\prime(x+1)$.
By induction hypothesis and Lemma~\ref{lemmamax2} we have that:

\begin{equation*}
\begin{split}
  f(x+1)  & = max(g(x+1),f(x)) + h(x+1) \\ 
  & \leq max(g(x+1), max(g(x),B) + (x - \Theta -1) \times max(g(x),h(x-1)) + h(x)) + h(x+1)
\end{split}
\end{equation*}

\noindent
By Lemma~\ref{lemmamax} we have that:

\begin{equation} \label{eq1}
\begin{split}
  f(x+1) & \leq max(g(x+1), max(g(x),B)) \\
  & + max(g(x+1),(x - \Theta - 1) \times max(g(x),h(x-1))) \\
  & + max(g(x+1), h(x)) \\
  & + h(x+1)
\end{split}
\end{equation}

\noindent
Consider now the first term appearing in the sum of the right hand
side of the inequality (\ref{eq1}).
Since $max$ is associative, and it holds that $\forall x: g(x+1) \geq
g(x)$ (which follows from the hypothesis of the theorem), we obtain:

\begin{equation} \label{eq2}
\begin{split}
  max(g(x+1), max(g(x),B)) & = max(max(g(x+1),g(x)),B) \\
  & = max(g(x+1),B)
\end{split}
\end{equation}

\noindent
We consider now the second term in (\ref{eq1}). By
Lemma~\ref{lemmamax} we obtain:

\begin{equation*}
\begin{split}
& max(g(x+1), (x-\Theta-1) \times max(g(x),h(x-1))) \\
& \leq (x-\Theta-1) \times max(g(x+1), max(g(x),h(x-1))) 
\end{split}
\end{equation*}

\noindent
As before, by associativity of $max$, this is equivalent to:

\begin{equation*}
\begin{split}
(x-\Theta-1) \times max(g(x+1),h(x-1))
\end{split}
\end{equation*}

\noindent
By Lemma~\ref{lemmamax2}, and $h(x-1) \leq h(x)$ (by hypothesis), we have that: 

\begin{equation} \label{eq3}
  \begin{split}    
    (x-\Theta-1) \times max(g(x+1),h(x))
\end{split}
\end{equation}

\noindent
Replacing the results of (\ref{eq2}) and (\ref{eq3}) in (\ref{eq1}):

\begin{equation*} 
\begin{split}
  f(x+1) & \leq max(g(x+1),B) \\
  & + (x - \Theta - 1) \times max(g(x+1),h(x)) \\
  & + max(g(x+1), h(x)) + h(x+1) \\
  & = max(g(x+1),B) \\
  & + (x - \Theta) \times max(g(x+1),h(x)) + h(x+1) \\
  & = f^\prime(x+1)
\end{split}
\end{equation*}
\noindent
$\therefore f(x+1) \leq f^\prime(x+1)$
\end{proof}

\label{lastpage}
\end{document}